\newenvironment{proof}[1][Proof.]{\begin{trivlist}
\item[\hskip \labelsep {\bfseries #1}]}{\hfill$\blacksquare$\end{trivlist}}
\newtheorem{theorem}{Theorem}
\newcommand{\parentarg}[1]{\text{parent}(\ensuremath{#1})}
\newcommand{\childarg}[2]{\text{child}(\ensuremath{#1,#2})}
\newcommand{\subtreesizearg}[1]{\text{subtree-size}(\ensuremath{#1})}
\newcommand{\levelancarg}[2]{\text{level-ancestor}(\ensuremath{#1,#2})}
\newcommand{\rmleafarg}[1]{\ensuremath{\text{rightmost-leaf}(#1)}}
\newcommand{\nextsiblingarg}[1]{\ensuremath{\text{next-sibling}(#1)}}
\newcommand{\prevsiblingarg}[1]{\ensuremath{\text{previous-sibling}(#1)}}
\newcommand{\leftchild}{\text{left-child}}
\newcommand{\rightchild}{\text{right-child}}
\newcommand{\parent}{\text{parent}}
\newcommand{\child}{\text{child}}
\newcommand{\subtreesize}{\text{subtree-size}}
\newcommand{\lca}{\text{LCA}}
\newcommand{\depth}{\text{depth}}
\newcommand{\levelanc}{\text{level-ancestor}}
\newcommand{\lmleaf}{\text{leftmost-leaf}}
\newcommand{\rmleaf}{\text{rightmost-leaf}}
\newcommand{\select}{\text{select}}
\newcommand{\preorder}{\text{preorder}}
\newcommand{\postorder}{\text{postorder}}
\newcommand{\inorder}{\text{inorder}}
\newcommand{\preorderRight}{\text{preorder-right}}
\newcommand{\postorderRight}{\text{postorder-right}}
\newcommand{\dfuds}{\text{DFUDS}}
\newcommand{\dfudsRight}{\text{DFUDS-RtL}}
\mathchardef\mhyphen="2D
\newcommand{\rmqarg}[2]{\ensuremath{\text{RMQ}(#1,#2)}}
\newcommand{\preorderarg}[1]{\ensuremath{\text{\preorder}(#1)}}
\newcommand{\trans}[1]{$\tau_{#1}$}
\newcommand{\transarg}[2]{\ensuremath{\tau_{#1}(#2)}}
\newcommand{\transTree}[1]{$T_{#1}$}
\newcommand{\transOrdering}[3]{$\tau_{#1}(\text{#2}) = \text{#3}$}
\newcommand{\bintree}{$T_b$}
\newcommand{\zak}{Zaks' sequence}
\newcommand{\minheap}{2d-Min-Heap}
\begin{document}

\title{On Succinct Representations of Binary Trees\thanks{An abstract of some of the results in this paper appeared in %
\emph{Computing and Combinatorics: Proceedings of the 18th Annual
International Conference COCOON 2012}, Springer LNCS 7434, pp. 396--407, 2012.}}

\author{Pooya Davoodi\\New York University, Polytechnic School of Engineering\\\texttt{pooyadavoodi@gmail.com} \and
Rajeev Raman\\
{University of Leicester}\\
\texttt{r.raman@leicester.ac.uk}\\
\and
{Srinivasa Rao Satti}\\
{Seoul National University}\\
\texttt{ssrao@cse.snu.ac.kr}
}
\maketitle
\begin{abstract}
We observe that a standard transformation between \emph{ordinal} trees
(arbitrary rooted trees with ordered children) and binary trees leads to interesting
succinct binary tree representations.  There are four symmetric versions of these
transformations.  Via these transformations we get four  
succinct representations of $n$-node binary trees  that
use $2n + n/(\log n)^{O(1)}$ bits and support (among other operations) navigation,
inorder numbering, one of pre- or post-order numbering, subtree size and 
lowest common ancestor (LCA) queries. The ability to support
inorder numbering is crucial for the well-known range-minimum query (RMQ) problem on
an array $A$ of $n$ ordered values.  
While this functionality, and more, is also 
supported in $O(1)$ time using $2n + o(n)$ bits
by Davoodi et al.'s (\emph{Phil. Trans. Royal Soc. A} \textbf{372} (2014)) extension of
a representation by Farzan and Munro (\emph{Algorithmica} \textbf{6} (2014)),
their \emph{redundancy}, or the $o(n)$ term, is much larger, and their approach may not be
suitable for practical implementations. 

One of these transformations is related to the Zaks' sequence (S.~Zaks, \emph{Theor.
Comput. Sci.} \textbf{10} (1980)) for encoding binary trees, and we thus provide the first
succinct binary tree representation based on Zaks' sequence. 
Another of these transformations is equivalent to Fischer and Heun's 
(\emph{SIAM J. Comput.} \textbf{40} (2011)) \minheap\ structure for this problem.
Yet another variant allows an encoding of 
the Cartesian tree of $A$ to be constructed from $A$ using only
$O(\sqrt{n} \log n)$ bits of working space.
\end{abstract}

%
%
%

\section{Introduction}
\label{sec:intro}
Binary trees are ubiquitous in computer science, and are integral to many 
applications that involve  indexing very large text collections.  
In such applications, the space usage of
the binary tree is an important consideration.  While a standard representation of a binary
tree node would use three pointers---to its left and right children, and to its parent---the
space usage of this representation, which is $\Theta(n\log n)$ bits to store an $n$-node tree, is unacceptable in
large-scale applications.  A simple counting
argument shows that the minimum space required to represent an $n$-node binary
tree is $2n - O(\log n)$ bits in the worst case.  
A number of
\emph{succinct} representations of static binary trees have been developed that support a wide range of operations
in $O(1)$ time\footnote{These results, and all others discussed here, assume the word RAM model with word size $\Theta(\log n)$ bits.}, 
using only $2n + o(n)$ bits \cite{jacobson89,fm-swat08}.

However, succinct binary tree representations have limitations.
A succinct binary tree representation can usually be divided into two parts:
a \emph{tree encoding}, which gives the structure of the tree, and
takes close to $2n$ bits, and an \emph{index} of $o(n)$ bits which is used
to perform operations on the given tree encoding.  It appears that the 
tree encoding constrains both the way nodes are numbered and the operations
that can be supported in $O(1)$ time using a $o(n)$-bit index.  For example,
the earliest succinct binary tree representation was due to Jacobson \cite{jacobson89},
but this only supported a level-order numbering, and while it supported basic
navigational operations such as moving to a parent, left child or right child in $O(1)$ time,
it did not support operations
such as \emph{lowest common ancestor (LCA)} and reporting the size of the subtree
rooted at a given node, in $O(1)$ time (indeed, it still remains unknown if there
is a $o(n)$-bit index to support these operations in Jacobson's encoding).

The importance of having a variety of node numberings and operations is 
illustrated by the \emph{range minimum
query (RMQ)} problem, defined as follows.
Given an array $A[1..n]$ of totally ordered values, RMQ 
problem is to preprocess $A$ into a data structure to answer the query $\rmqarg{i}{j}$: 
given two indexes  $1 \le i \le j \le n$, return the index of the minimum value in $A[i..j]$.
The aim is to minimize the query time, the space requirement of the data structure,
as well as the time and space requirements of the preprocessing. 
This problem finds variety of applications including range searching~\cite{Saxena2009}, text indexing~\cite{Abouelhoda2004,Sadakane2007}, text compression~\cite{Chen2008}, document retrieval~\cite{Muthukrishnan2002,Sadakane2007a,Valimaki2007}, flowgraphs~\cite{Georgiadis2004}, and position-restricted pattern matching~\cite{Iliopoulos2008}.  Since many of these applications deal with huge datasets, highly space-efficient solutions to the RMQ problem are of great interest.

A standard approach to solve the RMQ problem is via the~\emph{Cartesian tree}~\cite{Vuillemin1980}.
The Cartesian tree of $A$ is a binary tree obtained by labelling the root
with the value $i$ where $A[i]$ is the smallest
element in $A$.  The left subtree of the root is the Cartesian tree of
$A[1..i-1]$ and the right subtree of the root is the Cartesian tree of $A[i+1..n]$
(the Cartesian tree of an empty sub-array is the empty tree).   
As far as the RMQ problem is concerned, the key property
of the Cartesian tree is that the answer to the query $\rmqarg{i}{j}$ is the
label of the node that is the \emph{lowest common ancestor (LCA)} of
the nodes labelled $i$ and $j$ in the Cartesian tree.  Answering
RMQs this way does \emph{not} require access to $A$ at query time: this means
that $A$ can be (and often is) discarded after pre-processing. 
Since Farzan and
Munro \cite{fm-swat08} showed how to represent binary trees in $2n + o(n)$ bits
and support LCA queries in $O(1)$ time, it would appear that there is a fast and
highly space-efficient solution to the RMQ problem.  

Unfortunately, this is not true.  The difficulty is that the label 
of a node in the Cartesian tree (the index of the corresponding array element) is 
its rank in the \emph{inorder} traversal of the Cartesian tree. 
Until recently, none of the known binary tree representations \cite{jacobson89,fm-swat08,frs-icalp09}
was known to support inorder numbering.  Indeed, the first
$2n + o(n)$-bit and $O(1)$-time solution to the RMQ problem used an \emph{ordinal} tree,
or an arbitrary rooted, ordered tree, to answer RMQs \cite{fh-sjc11}.

Recently, Davoodi et al. \cite{NavarroDRS14} augmented the 
representation of \cite{fm-swat08} to support inorder numbering.  
However, Davoodi et al.'s result has some shortcomings.
The $o(n)$ additive term---the \emph{redundancy}---can be a significant overhead for practical 
values of $n$.
Using the results of \cite{DBLP:conf/focs/Patrascu08} (expanded upon in \cite{sn-soda10}), 
the redundancy of Jacobson's binary tree representation,
as well as Fischer and Heun's RMQ solution, can be reduced to $n/(\log n)^{O(1)}$:
this is not known to be true for the results of \cite{fm-swat08,NavarroDRS14}.  
Furthermore, there are several good practical implementations 
of ordinal trees \cite{acns09,ottaviano,sdsl},
but the approach of \cite{fm-swat08,NavarroDRS14} is complex and needs significant work
before its practical potential can even be evaluated.  
Finally, the results of \cite{fm-swat08,NavarroDRS14}
do not focus on the construction space or time, while this is considered in \cite{fh-sjc11}.

\subsubsection*{Our Results.}
We recall that there is a well-known transformation between binary trees and ordinal trees
(in fact, there are four symmetric versions of this transformation).
This allows us to represent a binary tree succinctly by transforming 
it into an ordinal tree, and then representing the ordinal tree succinctly.
We note a few interesting properties of the resulting binary tree representations:

\begin{itemize}
\item The resulting binary tree representations support inorder numbering in
addition to either postorder or preorder.
\item The resulting binary tree representations support a number of operations
including basic navigation, subtree size and LCA in $O(1)$ time; the latter implies
in particular that they are suitable for the RMQ problem.
\item The resulting binary tree representations use $2n + n/(\log n)^{O(1)}$ bits of
space, and so have low redundancy.
\item Since there are implementations of ordinal trees that are very fast in practice
\cite{acns09,ottaviano,sdsl}, we believe the resulting binary tree representations will
perform well in practice.
\item One of the binary tree representations, when applied to represent the Cartesian tree,
gives the same data structure as the \minheap\ of Fischer and Heun \cite{fh-sjc11}.
\item If one represents the ordinal tree using the BP encoding \cite{DBLP:conf/birthday/Raman013} then the
resulting binary tree encoding is \zak\ \cite{DBLP:journals/tcs/Zaks80}; we believe this to be the first succinct binary tree
representation based on \zak\ sequence.
\end{itemize}
Finally, we also show in Section~\ref{sec:construction-space} that using these representations,
we can make some improvements to the preprocessing phase of the RMQ problem.  Specifically,
we show that given an array $A$, a $2n+O(1)$-bit encoding of the tree structure of the
Cartesian tree of $A$ can be created in linear time using only $O(\sqrt{n} \log n)$ bits of 
working space.  This encoding can be augmented with additional data structures of 
size $o(n)$ bits, using only $o(n)$ bits of working space, thereby
``improving'' the result of~\cite{fh-sjc11} where $n+o(n)$ working 
space is used (the accounting of space is slightly different). 

\subsection{Preliminaries}
\label{sec:prelim}

\subsubsection*{Ordinal Tree Encodings.}
We now discuss two encodings of ordinal trees. The first encoding is the
natural \emph{balanced parenthesis (BP)} encoding~\cite{jacobson89,mr-sjc01}. 
The BP~sequence of an ordinal
tree is obtained by performing a depth-first traversal, and writing an opening parenthesis 
each time a node is visited, and a closing parenthesis immediately after all its descendants are visited. This gives a $2n$-bit encoding of an $n$-node ordinal
tree as a sequence of balanced parentheses.
The \emph{depth-first unary degree sequence (DFUDS)} \cite{bdmrrr-05}
is another encoding of an  $n$-node ordinal tree as a sequence of balanced parentheses.
This again visits the nodes in depth-first order (specifically, pre-order) 
and outputs the \emph{degree} $d$ of each node---defined here as the number of children
it has---in unary as $(^d )$. The resulting string is of length $2n -1$ bits
and is converted to a balanced parenthesis string by adding an open parenthesis 
to the start of the string.  See Figure~\ref{fig:tree-encodings} for an example.

\begin{figure}[t]
  \centering
  \includegraphics[scale=0.4]{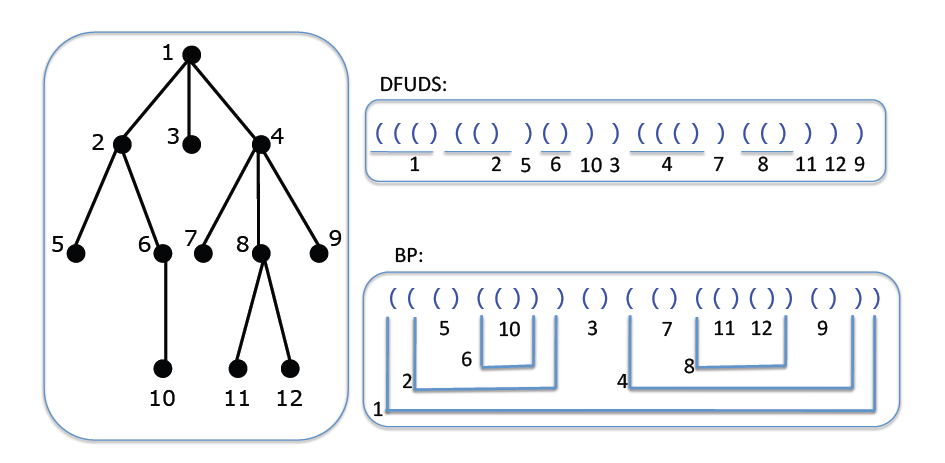}
\caption{BP and DFUDS encodings of ordinal trees.}
\label{fig:tree-encodings}
\end{figure}

\subsubsection*{Succinct Ordinal Tree Representations.}
Table~\ref{tab:ops} gives a subset of operations that are supported by 
various ordinal tree representations (see \cite{DBLP:conf/birthday/Raman013} for other operations).
\begin{table}
\caption{A list of operations on ordinal trees.  In all cases below, if the value returned by the operation is not defined (e.g. performing $\parent()$ on the root of the tree) an appropriate ``null'' value is returned.}
\label{tab:ops}
\centering
\begin{tabular}{|l|l|}
\hline
\multicolumn{1}{|c|}{Operation} & \multicolumn{1}{|c|}{Return Value}\\
\hline 
$\parent(x)$  & the parent of node $x$\\
$\child(x,i)$ & the $i$-th child of node $x$, for $i\geq1$\\
$\nextsiblingarg{x}$ & the next sibling of $x$\\ 
$\prevsiblingarg{x}$ & the previous sibling of $x$\\ 
$\depth(x)$ &  the depth of node $x$\\
$\select_{o}(j)$ & the $j$-th node in $o$-order, for $o \in \{ \preorder, \postorder,$ \\
                 & \hspace*{1in}$\preorderRight, \postorderRight\}$\\
$\lmleaf(x)$ & the leftmost leaf of the subtree rooted at node $x$\\
$\rmleaf(x)$ & the rightmost leaf of the subtree rooted at node $x$\\ 
$\subtreesizearg{x}$ & the size of the subtree rooted at node $x$ (excluding $x$ itself)\\
$\lca(x, y)$ & the lowest common ancestor of the nodes $x$ and $y$\\
$\levelancarg{x}{i}$ & the ancestor of node $x$ at depth $d-i$, 
where $d = \depth(x)$, for $i\geq0$\\
\hline
\end{tabular}
\end{table}
%
%
\begin{theorem}[\cite{sn-soda10}] 
\label{thm:ordinal-tree-rep}
There is a succinct ordinal tree representation that supports all operations in Table~\ref{tab:ops} in $O(1)$ time and uses $2n + n/(\log n)^{O(1)}$ bits of space, where $n$ denotes the number of nodes in the represented ordinal tree.
%
\end{theorem}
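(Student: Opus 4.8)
The statement is essentially a citation of a known result, so the plan is not to re-derive a succinct ordinal tree representation from scratch, but to assemble it from the two ingredients already discussed in the preliminaries: a $2n$-bit tree encoding (BP or DFUDS) and an $o(n)$-bit index, while driving the redundancy down to $n/(\log n)^{O(1)}$. First I would recall that both BP and DFUDS are $2n$-bit balanced-parenthesis strings, and that virtually all operations in Table~\ref{tab:ops} reduce, via the standard toolkit of \cite{mr-sjc01,bdmrrr-05}, to a constant number of primitive queries on the parenthesis string: \emph{rank/select} of open or close parentheses, \emph{findopen/findclose} (matching parenthesis), \emph{enclose} (tightest enclosing pair), and range-min/range-max style queries on the ``excess'' sequence (the height, at each position, of the BP prefix). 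For instance, in BP a node is named by the position of its open parenthesis; $\parent$, $\child$, $\subtreesize$, $\depth$, $\nextsibling$, $\lmleaf$, $\rmleaf$ all follow from findopen/findclose/enclose and excess queries, while $\lca$ reduces to a range-min on the excess array between the two nodes followed by an enclose, and $\levelanc$ reduces to a predecessor-type query on excess values. The preorder/postorder/DFUDS-left-to-right orderings and their ``right'' variants similarly reduce to rank/select on the two parenthesis types. I would present this reduction as a citation-backed summary rather than spelling out each of the dozen operations.

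The crux is then the space bound. The naive implementation of these primitives uses $2n$ bits for the string plus $\Theta(n \log\log n / \log n)$ bits for the various directories — an $o(n)$ but not $n/(\log n)^{O(1)}$ redundancy. The key step is to invoke the succinct-index machinery of P\v{a}tra\c{s}cu \cite{DBLP:conf/focs/Patrascu08} and its refinement in \cite{sn-soda10}: one shows that each primitive (rank, select, findopen/findclose, enclose, range-min/max on excess) admits a data structure that, given oracle access to the raw $2n$-bit BP/DFUDS string, occupies only $n/(\log n)^{O(1)}$ additional bits and answers in $O(1)$ time. For rank/select this is precisely P\v{a}tra\c{s}cu's ``succinct rank/select'' result; for the balanced-parenthesis operations and excess range queries it is the contribution of \cite{sn-soda10}, which builds a hierarchy of blocks whose descriptors are themselves stored compressed, so that the total index size is $O(n / \log^c n)$ for any desired constant $c$. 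Summing the $2n$-bit encoding with the $n/(\log n)^{O(1)}$-bit index gives the claimed total.

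The main obstacle — and the part a careful writeup would point at rather than reprove — is exactly this last reduction of redundancy for the parenthesis-matching and excess-range operations, not for plain rank/select. Range-min queries on the excess array of a BP string (needed for $\lca$, $\subtreesize$-type and leaf operations) and the matching-parenthesis operations are inherently ``recursive'': the standard solution partitions into blocks, recurses on a tree of block summaries, and the block summaries must be stored implicitly (recomputed from the raw string plus a small lookup table) to keep the redundancy at $n/(\log n)^{O(1)}$ rather than $\Theta(n \log\log n/\log n)$. Getting the accounting right — that every level of the recursion contributes a geometrically shrinking amount and that the base-case tables are of size $n^{o(1)}$ — is the delicate bookkeeping, and it is precisely what \cite{sn-soda10} establishes; I would cite it for this and avoid reconstructing the multi-level analysis here. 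With that in hand the theorem follows immediately by choosing BP (or DFUDS) as the encoding and layering the $O(1)$-time, $n/(\log n)^{O(1)}$-bit indexes for each primitive on top.
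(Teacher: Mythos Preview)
The paper does not prove this theorem at all: it is stated as a black-box citation of \cite{sn-soda10} with no accompanying argument, and is then invoked later (in Theorem~\ref{thm:transform}) purely as a tool. So there is nothing to compare your proposal against in the paper itself.

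That said, your sketch is a faithful outline of what the cited work actually does: encode the tree as a $2n$-bit BP string, reduce every operation in Table~\ref{tab:ops} to a constant number of primitives on the excess sequence (rank/select, findopen/findclose/enclose, range-min on excess, and forward/backward search for level-ancestor), and then support those primitives with the range min-max tree of \cite{sn-soda10}, whose multi-level block hierarchy together with P\v{a}tra\c{s}cu-style compressed summaries \cite{DBLP:conf/focs/Patrascu08} drives the index size down to $n/(\log n)^{O(1)}$. Your identification of the delicate part --- getting the redundancy for the matching/excess-range primitives (not just plain rank/select) below $n\log\log n/\log n$ --- is exactly right, and correctly deferred to \cite{sn-soda10}. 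For the purposes of this paper, however, a one-line citation suffices; your expanded sketch would be appropriate only if the paper were surveying the construction rather than using it.
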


\section{Succinct Binary Trees Via Ordinal Trees}
\label{sec:transform}
We describe succinct binary tree representations which are based on ordinal tree representations. In other words, we present a method that transforms a binary tree to an ordinal tree (indeed we study  several and similar transformations) with properties used to simulate various operations on the original binary tree. We show that using known succinct ordinal tree representations from the literature, we can build succinct binary tree representations that have not yet been discovered. We also introduce a relationship between two standard ordinal tree encodings (BP and DFUDS) and study how this relationship can be utilized in our binary tree representations.

\subsection{Transformations}
We define four transformations \trans{1}, \trans{2}, \trans{3}, and \trans{4} 
that transform a binary tree into an ordinal tree.
We describe the $i$-th transformation by stating how we generate the output ordinal tree \transTree{i} given an input binary tree \bintree. Let $n$ be the number of nodes in \bintree. The number of nodes in each of \transTree{1}, \transTree{2}, \transTree{3}, and \transTree{4} is $n+1$, where each node corresponds to a node in \bintree\ except the root (think of the root as a dummy node). In the following, we show the correspondence between the nodes in \bintree\ and the nodes in \transTree{i} by using the notation \transOrdering{i}{$u$}{$v$}, which means the node $u$ in \bintree\ corresponds to the node $v$ in \transTree{i}. Given a node $u$ in \bintree, and its corresponding node $v$ in \transTree{i}, we show which nodes in \transTree{i} correspond to the left and right children of $u$:
\\

\setlength{\tabcolsep}{0.6cm}
\begin{tabular}[h]{l l}
\transOrdering{1}{left-child$(u)$}{first-child$(v)$}
&
\transOrdering{1}{right-child$(u)$}{next-sibling$(v)$}
\\

\transOrdering{2}{left-child$(u)$}{last-child$(v)$}
&
\transOrdering{2}{right-child$(u)$}{previous-sibling$(v)$}
\\

\transOrdering{3}{left-child$(u)$}{next-sibling$(v)$}
&
\transOrdering{3}{right-child$(u)$}{first-child$(v)$}
\\

\transOrdering{4}{left-child$(u)$}{previous-sibling$(v)$}
&
\transOrdering{4}{right-child$(u)$}{last-child$(v)$}
\\
\end{tabular}
\\

The example in Figure \ref{fig:transformations} shows a binary tree which is transformed to an ordinal tree by each of the four transformations. Notice that \transTree{2} and \transTree{4} are the mirror images of \transTree{1} and \transTree{3}, respectively.

\begin{figure}[t]
  \centering
  \includegraphics[scale=0.43]{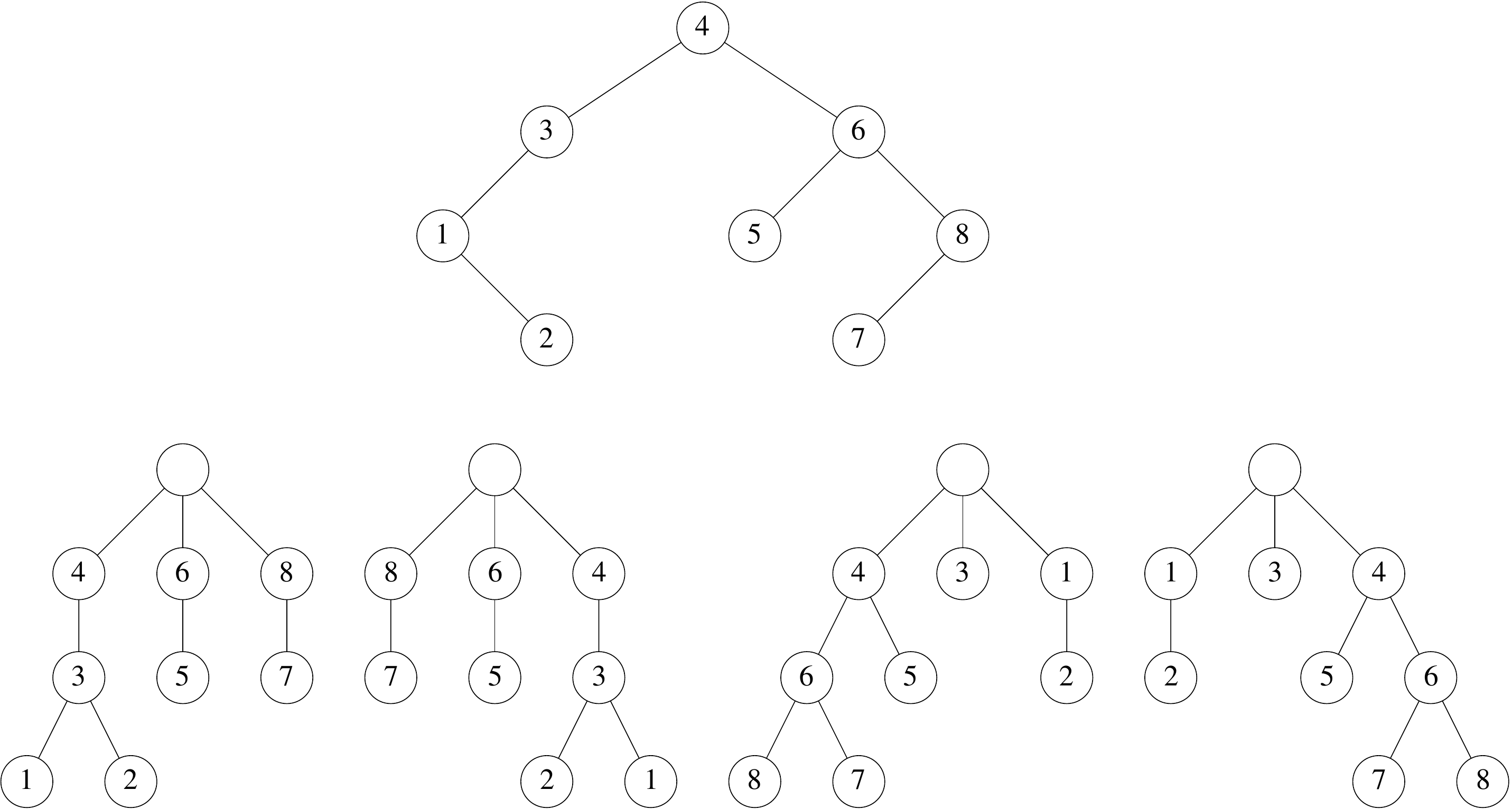}
  \caption{Top: a binary tree, nodes numbered in inorder. Bottom from left to right: the ordinal trees \transTree{1}, \transTree{2}, \transTree{3}, \transTree{4}. A node with number $k$ in the binary tree corresponds to the node with the same number $k$ in the ordinal trees. Notice that the roots of the ordinal trees do not correspond to any node in the binary tree.}
  \label{fig:transformations}
\end{figure}

\subsection{Effect of Transformations on Orderings}
It is clear that the order in which the nodes appear in \bintree\ is different from the node ordering in each of \transTree{1}, \transTree{2}, \transTree{3}, and \transTree{4}; however there is a surprising relation between these orderings. Consider the three standard orderings \inorder, \preorder, and \postorder\ on binary trees and ordinal trees\footnote{\inorder\ is only defined on binary trees}. While \preorder\ and \postorder\ arrange the nodes from left to right by default, we define their symmetries which arrange the nodes from right to left: \preorderRight\ and \postorderRight\ (i.e., visiting the children of each node from right to left in traversals). Each of the four transformations maps two of the binary tree orderings to two of the ordinal tree orderings. For example, if the first transformation maps \inorder\ to \postorder, that means a node with \inorder\ rank $k$ in \bintree\ corresponds to a node with \postorder\ rank $k$ in \transTree{1}. In the following, we show all of these mappings ($ -1 $ means that the mapping is off by 1, due to the presence of the dummy root): 

\setlength{\tabcolsep}{0.5cm}
\begin{tabular}{l l l}
\trans{1}: 
& 
$\inorder\  \rightarrow   \postorder$
&
$\preorder\  \rightarrow   \preorder\  -1 $
\\

\trans{2}: 
& 
$\inorder\  \rightarrow   \postorderRight$
&
$\preorder\  \rightarrow   \preorderRight\  -1 $
\\

\trans{3}: 
& 
$\inorder\  \rightarrow   \preorderRight -1$
&
$\postorder\  \rightarrow   \postorderRight $
\\

\trans{4}: 
& 
$\inorder\  \rightarrow   \preorder -1$
&
$\postorder  \rightarrow   \postorder $
\\
\end{tabular}

\subsection{Effect of Transformations on Ordinal Tree Encodings}
The four transformation methods define four ordinal trees \transTree{1}, \transTree{2}, \transTree{3}, \transTree{4} (from a given binary tree) which may look completely different, however they have relationships. In the following, we show a pairwise connection between \transTree{1} and \transTree{3} and between \transTree{2} and \transTree{4}.

\paragraph{From paths to siblings and vice versa.}
Recall the following transformation functions on the binary tree nodes:

\setlength{\tabcolsep}{1cm}
\begin{tabular}[h]{l l}
\transOrdering{1}{left-child$(u)$}{first-child$(v)$}
&
\transOrdering{1}{right-child$(u)$}{next-sibling$(v)$}
\\

\transOrdering{3}{left-child$(u)$}{next-sibling$(v)$}
&
\transOrdering{3}{right-child$(u)$}{first-child$(v)$}
\end{tabular}
\\
If we combine the above functions, we can conclude the following:
$$
\text{first-child$ (v) $ in \trans{1}} = \text{next-sibling$ (v) $ in \trans{3}}
$$
$$
\text{next-sibling$ (v) $ in \trans{1}} = \text{first-child$ (v) $ in \trans{3}}
$$
The above relation yields a connection between paths in \transTree{1} and the siblings in \transTree{3}. Consider any downward path $(u_1, u_2, \ldots, u_k)$ in \transTree{1}, where $u_1$ is the $i$-th child of a node for any $i>1$ and all $u_2, \ldots, u_k$ are the first child of their parents. All the nodes $u_1, u_2, \ldots, u_k$ become siblings in \transTree{3} and they are all the children of a node which corresponds to the previous-sibling of $u_1$ in \transTree{1}. Also, if $u_1$ is the root of \transTree{1} then $u_1$ corresponds to the root in \transTree{3} and all $u_2, \ldots, u_k$ become the children of the root of \transTree{3}. The paths in \transTree{3} are related to siblings in \transTree{1} in a similar way. Also, a similar connection exists between \transTree{2} and \transTree{4} with the only difference that first child, next sibling, and previous sibling should be turned into last child, next sibling, and previous sibling in order.

\paragraph{From BP to \dfuds\ and vice versa.}
Section~\ref{sec:prelim} introduced two standard ordinal tree encodings: BP and \dfuds. 
The relation between paths and siblings in the different types of transforms 
suggests a relation between the BP and \dfuds\ sequences of these transformations,
as paths of the kind considered above lead to consecutive sequences of parentheses, 
which in \dfuds\ can be viewed as the encoding of a group of siblings.
We now formalize this intuition. Let \dfudsRight\ be the \dfuds\ sequence where the children of every node are visited from right to left.

\begin{figure}
  \centering
  \includegraphics[scale=0.5]{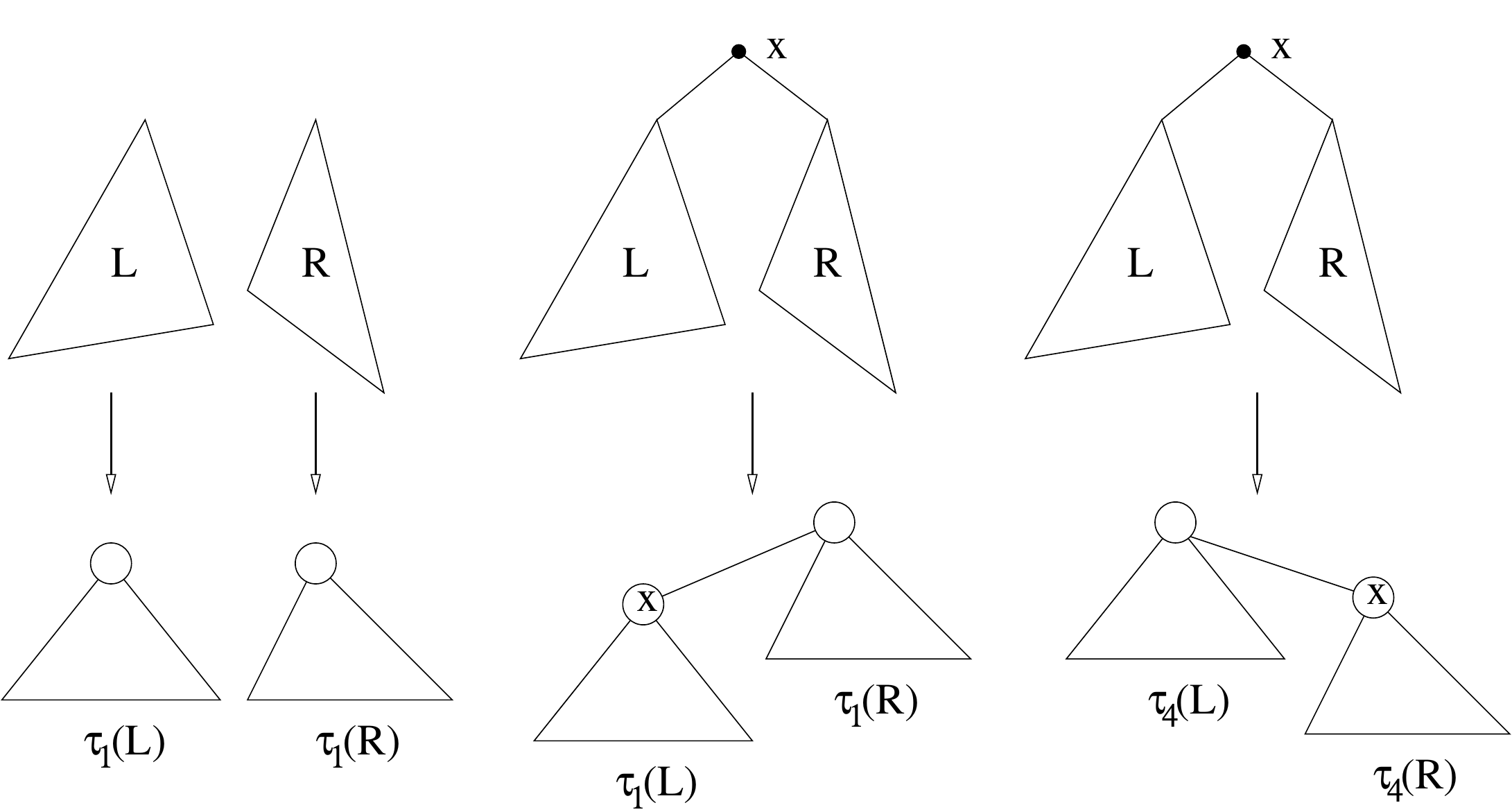}
  \caption{Illustrating how the result of transformations \trans{1} and \trans{4} on a binary tree with root $x$, left subtree $L$ and right subtree $R$ can be obtained from the transformations \trans{1} and \trans{4} on $L$ and $R$.}
  \label{fig:join}
\end{figure}

\begin{theorem}
\label{thm:equivalence}
For any binary tree $T_b$, let
\transTree{i} denote the result of \trans{i} applied to $T_b$. Then:
\begin{enumerate}
\item
The BP sequences of \transTree{1} and \transTree{3}
are the reverses of the BP sequences of \transTree{2} and \transTree{4}, respectively.

\item The \dfuds\ sequences of \transTree{1} and \transTree{3} are equal to the
\dfudsRight\ sequences of \transTree{2} and \transTree{4}, respectively.

\item
The BP sequence of \transTree{1} is equal to the \dfuds\ sequence of \transTree{4}.

\item
The BP sequence of \transTree{2} is equal to the \dfuds\ sequence of \transTree{3}.
\end{enumerate}
\end{theorem}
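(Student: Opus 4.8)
The plan is to make identity~3 the combinatorial core of the theorem and to obtain the other three from it together with symmetry.

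First I would record the elementary relations among the four transformations. Let $\mu$ be the ``mirroring'' operation that reverses the order of children at every node (of a binary tree, or of an ordinal tree). Unfolding the four rule sets gives $\transarg{2}{T_b}=\mu(\transarg{1}{T_b})$, $\transarg{4}{T_b}=\mu(\transarg{3}{T_b})$ — this is the remark already made after Figure~\ref{fig:transformations} — and $\transarg{3}{T_b}=\transarg{1}{\mu(T_b)}$, since exchanging ``left''$\leftrightarrow$``right'' together with ``first-child/next-sibling''$\leftrightarrow$``last-child/previous-sibling'' in the rules is exactly the effect of $\mu$. Now the BP sequence of $\mu(O)$ is the reversal (the string read backwards with the two bracket symbols swapped) of the BP sequence of $O$, so part~1 is immediate; and traversing the children of every node of $O$ from right to left is the same as traversing them left to right in $\mu(O)$, so $\dfudsRight(O)=\dfuds(\mu(O))$ and part~2 is immediate as well.

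For part~3 I would induct on $T_b$ using the recursive ``join'' decomposition illustrated in Figure~\ref{fig:join} ($\tau_1$ being just the classical first-child/next-sibling encoding with a dummy root prepended). If $T_b$ has root $x$, left subtree $L$ and right subtree $R$, then in $T_1$ the node $\hat x$ corresponding to $x$ is the first child of the dummy root, the subtree below $\hat x$ is a copy of $\tau_1(L)$, and the remaining children of the dummy root are precisely the children of the root of $\tau_1(R)$; symmetrically, in $T_4$ the children of the dummy root are the children of the root of $\tau_4(L)$ followed by $\hat x$, whose subtree is a copy of $\tau_4(R)$. Reading these off as bracket strings, and writing $t(\cdot)$ for deletion of the leading symbol, one gets $\mathrm{BP}(\tau_1(T_b)) = \mathtt{(}\,\mathrm{BP}(\tau_1(L))\,t(\mathrm{BP}(\tau_1(R)))$ — the outer bracket pair wrapping the two pieces collapses into this form precisely because a BP string ends in ``$\mathtt{)}$'' — and, after telescoping the unary code $\mathtt{(}^{a+1}\mathtt{)}$ of the dummy root of $T_4$ (where $a$ is the degree of the root of $\tau_4(L)$) into one extra ``$\mathtt{(}$'' plus the code $\mathtt{(}^{a}\mathtt{)}$, which then recombines with the children codes to rebuild $\dfuds(\tau_4(L))$, one gets $\dfuds(\tau_4(T_b)) = \mathtt{(}\,\dfuds(\tau_4(L))\,t(\dfuds(\tau_4(R)))$. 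Thus $\mathrm{BP}\circ\tau_1$ and $\dfuds\circ\tau_4$ satisfy the identical recursion $F(T_b)=\mathtt{(}\,F(L)\,t(F(R))$ with the identical base case $F(\varnothing)=\mathtt{()}$ — each of $\tau_1(\varnothing),\tau_4(\varnothing)$ being the single dummy node — hence are equal.

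Part~4 is the mirror image of part~3: feeding $\mu(T_b)$ into part~3 and rewriting with $\transarg{3}{T_b}=\transarg{1}{\mu(T_b)}$ and $\transarg{4}{\mu(T_b)}=\transarg{2}{T_b}$ (both immediate from the relations above) turns it into the asserted equality between the BP sequence of one of $T_2,T_3$ and the DFUDS sequence of the other; equivalently one reruns the part~3 induction verbatim with the mirror-image join decompositions of $\tau_2$ and $\tau_3$. The step I expect to be the real fight is the bracket bookkeeping inside part~3: the BP side naturally produces ``$\mathtt{(}\,F(L)\,(\text{trimmed }F(R))\,\mathtt{)}$'' while the DFUDS side naturally produces ``$\mathtt{(}\,\mathtt{(}\,F(L)\,(\text{trimmed }F(R))$'', and these have to be coaxed into one normal form using exactly the two structural facts above — a BP string ends in ``$\mathtt{)}$'', and the DFUDS code of a degree-$(d{+}1)$ node sheds a single leading ``$\mathtt{(}$''. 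Pinning down the join decompositions themselves — in particular that an entire left- (resp.\ right-) spine of $T_b$ becomes a run of siblings, so that ``the children of the root of $\tau_i(L)$'' is genuinely the correct recursive object — is the only other place that demands care; the rest is routine.
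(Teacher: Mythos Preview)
Your plan is essentially the paper's: both make part~3 the inductive core, derive~1 and~2 from mirror symmetry, and treat~4 as a symmetric variant. Your recursion $F(T_b)=\texttt{(}\,F(L)\,t(F(R))$ with $F(\varnothing)=\texttt{()}$ is exactly the paper's $((Y)Z)$ computation rewritten; the paper takes the singleton tree as the base case instead of the empty one, but this is cosmetic. The bracket bookkeeping you flag as ``the real fight'' is handled in the paper precisely as you anticipate: the extra child of the dummy root in $\tau_4(T_b)$ contributes one leading \texttt{(} to the core DFUDS, which then re-assembles into $\dfuds(\tau_4(L))$.

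One point worth noting. Your substitution $T_b\mapsto\mu(T_b)$ in part~3, together with $\tau_3=\tau_1\circ\mu$ and $\tau_4\circ\mu=\tau_2$, yields $\mathrm{BP}(T_3)=\dfuds(T_2)$, \emph{not} $\mathrm{BP}(T_2)=\dfuds(T_3)$ as stated in the theorem. You hedge (``one of $T_2,T_3$ and the DFUDS sequence of the other''), and rightly so: the ``analogous'' induction the paper alludes to actually proves $\mathrm{BP}(T_3)=\dfuds(T_2)$ as well. The point is that the DFUDS argument needs the new node $x$ to be inserted as the \emph{last} child so that its subtree's core DFUDS can simply be appended; this happens in $\tau_2$, not in $\tau_3$. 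A direct check on the three-node tree (root with one left and one right child) gives $\mathrm{BP}(T_2)=\texttt{(()(()))}$ but $\dfuds(T_3)=\texttt{((()()))}$, so part~4 as printed is a typo and your derivation recovers the intended statement. Your fallback (``rerun the induction with $\tau_2,\tau_3$'') would \emph{not} salvage the printed version; it too produces $\mathrm{BP}(T_3)=\dfuds(T_2)$.
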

\begin{proof} (1) and (2) follow directly from the definitions of the transformations.

We prove (3) by induction ((4) is analogous). For
any ordinal tree $T$ denote by BP($T$) and \dfuds($T$) the BP and
\dfuds\ encodings of $T$.  For the base case, if $T_b$ consists of a singleton node, then 
BP(\transTree{1}) = \dfuds(\transTree{4}) = (()).  
Before going to the inductive case, observe that for any binary tree
$T$, BP(\trans{i}($T$)) is a string of the form $(Y)$, where the parenthesis 
pair enclosing $Y$ represents the dummy root of 
\trans{i}($T$), and $Y$ is the BP representation of the
ordered forest obtained by deleting the dummy root.  
On the other hand, \dfuds(\trans{i}(T)), which can also
be viewed as a string of the form $(Y)$, is interpreted differently:
the first open parenthesis is a dummy, while $Y)$ is the \emph{core} \dfuds\ representation
of the entire tree \trans{i}($T$), including the dummy root. 

Now let $T_b$ be a binary tree with root $x$, left subtree $L$ and right subtree $R$. 
By induction, BP(\trans{1}($L$)) = \dfuds(\transTree{4}($L$)) = $(Y)$ and
BP(\transTree{1}($R$)) = \dfuds(\transTree{4}($R$)) = $(Z)$.  
Note that \transTree{1} is obtained as follows.  Replace the dummy root of
\trans{1}($L$) by $x$, and insert $x$ as the first child
of the dummy root of \trans{1}($R$) (see Figure~\ref{fig:join}). Thus, BP(\transTree{1}) = $((Y)Z)$.

Next, \transTree{4} is obtained as follows: replace the dummy root of
\trans{4}($R$) by $x$, and insert $x$ as the last child of the dummy root
of \trans{4}($L$) (see Figure~\ref{fig:join}).  
The core \dfuds\ representation of \transTree{4}
(i.e. excluding the dummy open parenthesis) is obtained as follows: add an 
open parenthesis to the front of the core \dfuds\ representation of \trans{4}($L$),
to indicate that the root of \trans{4}($L$), which is also the root of
\transTree{4}, has one additional child.  This gives the parenthesis sequence
$(Y)$.  To this we append the
core \dfuds\ representation of \trans{4}($R$), giving $(Y)Z)$.  
Now we add the dummy open parenthesis, giving
$((Y)Z)$, which is the same as BP(\transTree{1}).
\end{proof}

\subsection{Succinct Binary Tree Representations}
We consider the problem of encoding a binary tree in a succinct data structure that supports the operations: \leftchild, \rightchild, \parent, \subtreesize, \lca, \inorder, and one of \preorder\ or \postorder\ (we give two data structures supporting \preorder\ and two other ones supporting \postorder). We design such a data structure by transforming the input binary tree into an ordinal tree using any of the transformations \trans{1}, \trans{2}, \trans{3}, or \trans{4}, and then encoding the ordinal tree by a succinct data structure that supports appropriate operations. For each of the four ordinal trees that can be obtained by the transformations, we give a set of algorithms, each performs one of the binary tree operations. In other words, we show how each of the binary tree operations can be performed by ordinal tree operations. Here, we show this for transformation \trans{1}, and later in the section we present the pseudocode for all the transformations. In the following, $u$ denotes any given node in a binary tree \bintree, and \transarg{1}{u} denotes the node corresponding to $u$ in \transTree{1}:

\paragraph{\leftchild.}
The left child of $u$ is the first child of \transarg{1}{u}, which can be determined by the operation \childarg{1}{\transarg{1}{u}} on \transTree{1}. 

\paragraph{\rightchild.}
The right child of $u$ is the next sibling of \transarg{1}{u}, which can be determined by the operation \nextsiblingarg{\transarg{1}{u}} on \transTree{1}.

\paragraph{\parent.}
To determine the parent of $u$, there are two cases: 
(1) if \transarg{1}{u} is the first child of its parent, then the answer is the parent of \transarg{1}{u} to be determined by \parentarg{\transarg{1}{u}} on \transTree{1}; 
(2) if \transarg{1}{u} is not the first child of its parent, then the answer is the previous sibling of \transarg{1}{u} to be determined by the operation \prevsiblingarg{\transarg{1}{u}} on \transTree{1}.

\paragraph{subtree-size.} 
The subtree size of $u$ is equal to the subtree size of \transarg{1}{u} plus the sum of the subtree sizes of all the siblings to the right of \transarg{1}{u}.
Let $\ell$ be the rightmost leaf in the subtree of the parent of \transarg{1}{u}. To obtain the above sum, we subtract the preorder number of \transarg{1}{u} from the preorder number of~$\ell$, which can be performed using the operations \rmleafarg{\parentarg{\transarg{1}{u}}} and \preorder.

\paragraph{LCA}
Let $w$ be the LCA of the two nodes $u$ and $v$ in \bintree. We want to find $w$ given $u$ and $v$, assuming w.l.o.g. that \preorderarg{u} is smaller than \preorderarg{v}. Let lca be the LCA of \transarg{1}{u} and \transarg{1}{v} in \transTree{1}. Observe that lca is a child of \transarg{1}{w} and an ancestor of \transarg{1}{u}. Thus, we only need to find the ancestor of \transarg{1}{u} at level $i$, where $i-1$ is the depth of lca. To compute this, we utilize the operations \lca, \depth, and \levelanc\ on \transTree{1}.

See Tables \ref{tbl:operations-t1} through 
\ref{tbl:operations-t4} for the pseudocode of all the binary tree operations on all four transformations.
\begin{table}[f]
\caption{Binary tree operations performed using ordinal tree operations for transformation \trans{1}. The ordinal tree operations \preorder, \preorderRight\ \postorder, and \postorderRight\ are the same as $\select_{o}$ for different values of $ o $ (refer to Table \ref{tab:ops}).}
\label{tbl:operations-t1}
\begin{tabular}{ll}

\begin{minipage}{5cm}
\begin{alltt}
\inorderarg{u}:
    return \postorderarg{\transarg{1}{u}}
\end{alltt}
\end{minipage}
&
\begin{minipage}{5cm}
\begin{alltt}
\preorderarg{u}:
    return \preorderarg{\transarg{1}{u}} - 1 
\end{alltt}
\end{minipage}

\\
\\

\begin{minipage}{5cm}
\begin{alltt}
\leftchildarg{u}:
    return \childarg{1}{\transarg{1}{u}}
\end{alltt}
\end{minipage}
&
\begin{minipage}{5cm}
\begin{alltt}
\rightchildarg{u}:
    return \nextsiblingarg{\transarg{1}{u}}
\end{alltt}
\end{minipage}

\\
\\

\begin{minipage}{5cm}
\begin{alltt}
\parentarg{u}:
    \ensuremath{v} = \prevsiblingarg{\transarg{1}{u}}
    if \ensuremath{v} \ensuremath{\neq} NULL
        return \ensuremath{v}
    else
        return \parentarg{\transarg{1}{u}}
\end{alltt}

\begin{alltt}
\subtreesizearg{u}:
    \ensuremath{\ell} = \rmleafarg{\parentarg{\transarg{1}{u}}}
    return \preorderarg{\ensuremath{\ell}} - \preorderarg{\transarg{1}{u}}
\end{alltt}
\end{minipage}
&
\begin{minipage}{8cm}
\begin{alltt}
//assume \preorderarg{u} < \preorderarg{v}
\lcaarg{u}{v}:
    lca = \lcaarg{\transarg{1}{u}}{\transarg{1}{v}}
    if lca == \transarg{1}{u}
        return lca
    if lca == \parentarg{\transarg{1}{u}}
        return \transarg{1}{u}
    \ensuremath{i} = \deptharg{\text{lca}}+1
    return \levelancarg{\transarg{1}{u}}{i}
\end{alltt}
\end{minipage}

\end{tabular}
\end{table}

\begin{table}[f]
\caption{Binary tree operations performed using ordinal tree operations for transformation \trans{2}.  The ordinal tree operations \preorder, \preorderRight\ \postorder, and \postorderRight\ are the same as $\select_{o}$ for different values of $ o $ (refer to Table \ref{tab:ops}).}
\label{tbl:operations-t2}
\begin{tabular}{ll}

\begin{minipage}{5cm}
\begin{alltt}
\inorderarg{u}:
    return \postorderRightarg{\transarg{2}{u}}
\end{alltt}
\end{minipage}
&
\begin{minipage}{5cm}
\begin{alltt}
\preorderarg{u}:
    return \preorderRightarg{\transarg{2}{u}} - 1 
\end{alltt}
\end{minipage}

\\
\\

\begin{minipage}{5cm}
\begin{alltt}
\leftchildarg{u}:
    \ensuremath{d} = \degreearg{\transarg{2}{u}}
    return \childarg{d}{\transarg{2}{u}}
\end{alltt}
\end{minipage}
&
\begin{minipage}{5cm}
\begin{alltt}
\rightchildarg{u}:
    return \prevsiblingarg{\transarg{2}{u}}
\end{alltt}
\end{minipage}

\\
\\

\begin{minipage}{5cm}
\begin{alltt}
\parentarg{u}:
    \ensuremath{v} = \nextsiblingarg{\transarg{2}{u}}
    if \ensuremath{v} \ensuremath{\neq} NULL
        return \ensuremath{v}
    else
        return \parentarg{\transarg{2}{u}}
\end{alltt}

\begin{alltt}
\subtreesizearg{u}:
    \ensuremath{\ell} = \lmleafarg{\parentarg{\transarg{2}{u}}}
    return \preorderRightarg{\ensuremath{\ell}} -
                \preorderRightarg{\transarg{2}{u}}
\end{alltt}
\end{minipage}
&
\begin{minipage}{8cm}
\begin{alltt}
//assume \preorderarg{u} > \preorderarg{v}
\lcaarg{u}{v}:
    lca = \lcaarg{\transarg{2}{u}}{\transarg{2}{v}}
    if lca == \transarg{2}{u}
        return lca
    if lca == \parentarg{\transarg{2}{u}}
        return \transarg{2}{u}
    \ensuremath{i} = \deptharg{\text{lca}}+1
    return \levelancarg{\transarg{2}{u}}{i}
\end{alltt}
\end{minipage}

\end{tabular}
\end{table}


\begin{table}[f]
\caption{Binary tree operations performed using ordinal tree operations for transformation \trans{3}.  The ordinal tree operations \preorder, \preorderRight\ \postorder, and \postorderRight\ are the same as $\select_{o}$ for different values of $ o $ (refer to Table \ref{tab:ops}).}
\label{tbl:operations-t3}
\begin{tabular}{ll}

\begin{minipage}{5cm}
\begin{alltt}
\inorderarg{u}:
    return \preorderRightarg{\transarg{3}{u}} - 1 
\end{alltt}
\end{minipage}
&
\begin{minipage}{5cm}
\begin{alltt}
\postorderarg{u}:
    return \postorderarg{\transarg{3}{u}}
\end{alltt}
\end{minipage}

\\
\\

\begin{minipage}{5cm}
\begin{alltt}
\leftchildarg{u}:
    return \nextsiblingarg{\transarg{3}{u}}
\end{alltt}
\end{minipage}
&
\begin{minipage}{5cm}
\begin{alltt}
\rightchildarg{u}:
    return \childarg{1}{\transarg{3}{u}}
\end{alltt}
\end{minipage}

\\
\\

\begin{minipage}{5cm}
\begin{alltt}
\parentarg{u}:
    \ensuremath{v} = \prevsiblingarg{\transarg{3}{u}}
    if \ensuremath{v} \ensuremath{\neq} NULL
        return \ensuremath{v}
    else
        return \parentarg{\transarg{3}{u}}
\end{alltt}

\begin{alltt}
\subtreesizearg{u}:
    \ensuremath{\ell} = \rmleafarg{\parentarg{\transarg{3}{u}}}
    return \postorderRightarg{\ensuremath{u}} -
               \postorderRightarg{\transarg{3}{\ell}}
\end{alltt}
\end{minipage}
&
\begin{minipage}{8cm}
\begin{alltt}
//assume \preorderarg{u} < \preorderarg{v}
\lcaarg{u}{v}:
    lca = \lcaarg{\transarg{3}{u}}{\transarg{3}{v}}
    if lca == \transarg{3}{u}
        return lca
    if lca == \parentarg{\transarg{3}{u}}
        return \transarg{3}{u}
    \ensuremath{i} = \deptharg{\text{lca}}+1
    return \levelancarg{\transarg{3}{u}}{i}
\end{alltt}
\end{minipage}

\end{tabular}
\end{table}

\begin{table}[f]
\caption{Binary tree operations performed using ordinal tree operations for transformation \trans{4}.  The ordinal tree operations \preorder, \preorderRight\ \postorder, and \postorderRight\ are the same as $\select_{o}$ for different values of $ o $ (refer to Table \ref{tab:ops}).}
\label{tbl:operations-t4}
\begin{tabular}{ll}

\begin{minipage}{5cm}
\begin{alltt}
\inorderarg{u}:
    return \preorderarg{\transarg{4}{u}} - 1 
\end{alltt}
\end{minipage}
&
\begin{minipage}{5cm}
\begin{alltt}
\postorderarg{u}:
    return \postorderarg{\transarg{4}{u}}
\end{alltt}
\end{minipage}

\\
\\

\begin{minipage}{5cm}
\begin{alltt}
\leftchildarg{u}:
    return \prevsiblingarg{\transarg{4}{u}}
\end{alltt}
\end{minipage}
&
\begin{minipage}{5cm}
\begin{alltt}
\rightchildarg{u}:
    \ensuremath{d} = \degreearg{\transarg{4}{u}}
    return \childarg{d}{\transarg{4}{u}}\end{alltt}
\end{minipage}

\\
\\

\begin{minipage}{5cm}
\begin{alltt}
\parentarg{u}:
    \ensuremath{v} = \nextsiblingarg{\transarg{4}{u}}
    if \ensuremath{v} \ensuremath{\neq} NULL
        return \ensuremath{v}
    else
        return \parentarg{\transarg{4}{u}}
\end{alltt}

\begin{alltt}
\subtreesizearg{u}:
    \ensuremath{\ell} = \lmleafarg{\parentarg{\transarg{4}{u}}}
    return \postorderarg{\ensuremath{u}} -
                \postorderarg{\transarg{4}{\ell}}
\end{alltt}
\end{minipage}
&
\begin{minipage}{8cm}
\begin{alltt}
//assume \preorderarg{u} > \preorderarg{v}
\lcaarg{u}{v}:
    lca = \lcaarg{\transarg{4}{u}}{\transarg{4}{v}}
    if lca == \transarg{4}{u}
        return lca
    if lca == \parentarg{\transarg{4}{u}}
        return \transarg{4}{u}
    \ensuremath{i} = \deptharg{\text{lca}}+1
    return \levelancarg{\transarg{4}{u}}{i}
\end{alltt}
\end{minipage}

\end{tabular}
\end{table}

\begin{theorem}
\label{thm:transform}
There is a succinct binary tree representation of size $2n + n/(\log n)^{O(1)}$ bits that supports all the operations \leftchild, \rightchild, \parent, \subtreesize, \lca, \inorder, and one of \preorder\ or \postorder\ in $O(1)$ time, where $n$ denotes the number of nodes in the represented binary tree.
\end{theorem}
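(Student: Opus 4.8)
\section*{Proof proposal for Theorem~\ref{thm:transform}}

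The plan is to reduce to Theorem~\ref{thm:ordinal-tree-rep}. Given an $n$-node binary tree \bintree, I would apply transformation \trans{1} to obtain the ordinal tree \transTree{1} on $n+1$ nodes, and store \transTree{1} using the succinct ordinal tree representation of Theorem~\ref{thm:ordinal-tree-rep}. The space is $2(n+1) + (n+1)/(\log(n+1))^{O(1)} = 2n + n/(\log n)^{O(1)}$ bits, as claimed. A node $u$ of \bintree\ is represented simply by the handle of its corresponding node \transarg{1}{u} in \transTree{1}; since the transformation gives a bijection between the nodes of \bintree\ and the non-root nodes of \transTree{1}, no auxiliary structure is needed to translate between the two, and every reported value can be returned in the same representation. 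It then suffices to check that, for each binary tree operation, the pseudocode in Table~\ref{tbl:operations-t1} performs a constant number of ordinal tree operations on \transTree{1}, each of which runs in $O(1)$ time by Theorem~\ref{thm:ordinal-tree-rep}; correctness is what needs argument.

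The navigational and numbering operations are routine. \leftchild\ and \rightchild\ are immediate from the defining equations of \trans{1} (left child $\leftrightarrow$ first child, right child $\leftrightarrow$ next sibling). For \parent, I would observe that every non-root node $v$ of \transTree{1} is either the first child of its parent---in which case $v$ arose as the left child of its \bintree-parent and the answer is \parentarg{v}---or has a previous sibling---in which case $v$ arose as a right child and the answer is \prevsiblingarg{v}; this is exactly the case split in the pseudocode. For \inorder\ and \preorder, I would invoke the ordering table established earlier: \inorder\ rank $k$ in \bintree\ equals \postorder\ rank $k$ in \transTree{1}, and \preorder\ rank $k$ equals \preorder\ rank $k+1$ in \transTree{1} (the $-1$ being the dummy root offset), so each is one $\select_o$ call, and the consistency of these two maps with each other and with the navigational maps should be checked against the recursive structure of \trans{1}.

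The two operations carrying the real content are \subtreesize\ and \lca, and I expect the subtree-size correspondence to be the main obstacle. The key structural fact, provable by induction on the transformation using the two defining equations, is that the subtree of $u$ in \bintree\ corresponds in \transTree{1} to \transarg{1}{u} together with all of its right siblings and their subtrees; this collection of nodes forms a contiguous range in \preorder\ of \transTree{1} that ends at the rightmost leaf of the parent of \transarg{1}{u}, so its size is a difference of \preorder\ numbers, computed via \rmleafarg{\parentarg{\transarg{1}{u}}} and \preorder\ as in the pseudocode; here one must handle the conventions about whether $u$ itself is counted and the case where \parentarg{\transarg{1}{u}} is the dummy root. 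For \lca, with $w=\lca(u,v)$ in \bintree\ and $\preorderarg{u}<\preorderarg{v}$, I would establish that $\text{lca} := \lcaarg{\transarg{1}{u}}{\transarg{1}{v}}$ in \transTree{1} is simultaneously an ancestor of \transarg{1}{u} and a child of \transarg{1}{w}; hence \transarg{1}{w} is the ancestor of \transarg{1}{u} at depth $\deptharg{\text{lca}}-1$, returned by the \levelanc\ call, with the two guard conditions ($\text{lca}=\transarg{1}{u}$, i.e.\ $u$ is an ancestor of $v$; and $\text{lca}=\parentarg{\transarg{1}{u}}$) dispatched separately. Finally I would remark that \trans{2}, \trans{3}, \trans{4} are handled identically by the mirror/reversal symmetries noted earlier, using Tables~\ref{tbl:operations-t2}--\ref{tbl:operations-t4}; \trans{1} and \trans{2} yield the \preorder\ variant and \trans{3} and \trans{4} the \postorder\ variant, giving the four promised data structures.
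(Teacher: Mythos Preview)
Your proposal is correct and is exactly the paper's approach: apply one of the transformations $\tau_1,\ldots,\tau_4$ to obtain an $(n{+}1)$-node ordinal tree and represent it via Theorem~\ref{thm:ordinal-tree-rep}, then read off each binary-tree operation through the correspondences already established in the text and in Tables~\ref{tbl:operations-t1}--\ref{tbl:operations-t4}. The paper's own proof is in fact just that one sentence; your write-up simply unpacks the correctness checks (one small slip: in the $\lca$ argument the correct relationship is that $\tau_1(w)$ is a \emph{child} of the ordinal-tree $\mathrm{lca}$, hence at depth $\deptharg{\mathrm{lca}}+1$, not $-1$---the paper's prose has the same reversal, but the reduction to $\levelanc$ is unaffected).
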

\begin{proof}
We can use any of the transformations \trans{1}, \trans{2}, \trans{3}, or \trans{4} and use any ordinal tree representation that uses $2n + n/(\log n)^{O(1)}$ bits and supports the operations required by our algorithms (Theorem~\ref{thm:ordinal-tree-rep}).
\end{proof}

\subsection{BP for Binary Trees (\zak)}
\label{sec:binary-tree-bp}
Let \bintree\ be a binary tree which is transformed to the ordinal tree \transTree{1} by the first transformation \trans{1}. We show that the BP sequence of \transTree{1} is a special sequence for \bintree. This sequence is called \zak, and our transformation methods show that \zak\ can be indeed used to encode binary trees into a succinct data structure that supports various operations. In the following, we give a definition for \zak\ of \bintree\ and we show that it is indeed equal to the BP sequence of~\transTree{1}.

\paragraph{\zak.}
Initially, extend the binary tree \bintree\ by adding external nodes wherever there is a missing child. Now, label the internal nodes with an open parenthesis, and the external nodes with a closing parenthesis. \zak\ of \bintree\ is derived by traversing \bintree\ in \preorder\ and printing the labels of the visited nodes. If \bintree\ has $n$ nodes, \zak\ is of length $2n +1$. If we insert an open parenthesis at the beginning of \zak, then it becomes a balanced-parentheses sequence. Notice that \zak\ is different from Jacobson's~\cite{jacobson89} approach where \bintree\ is traversed in level-order. See Figure \ref{fig:binary-tree-bp} for an example.

Each pair of matching open and close parentheses in \zak, except the extra open parenthesis at the beginning and its matching close parenthesis, is (conceptually) associated with a node in \bintree. Observe that the open parentheses in the sequence from left to right correspond to the nodes in preorder, and the closing parentheses from left to right correspond to the nodes in inorder.

\begin{figure}[t]
  \centering
  \includegraphics[scale=0.5]{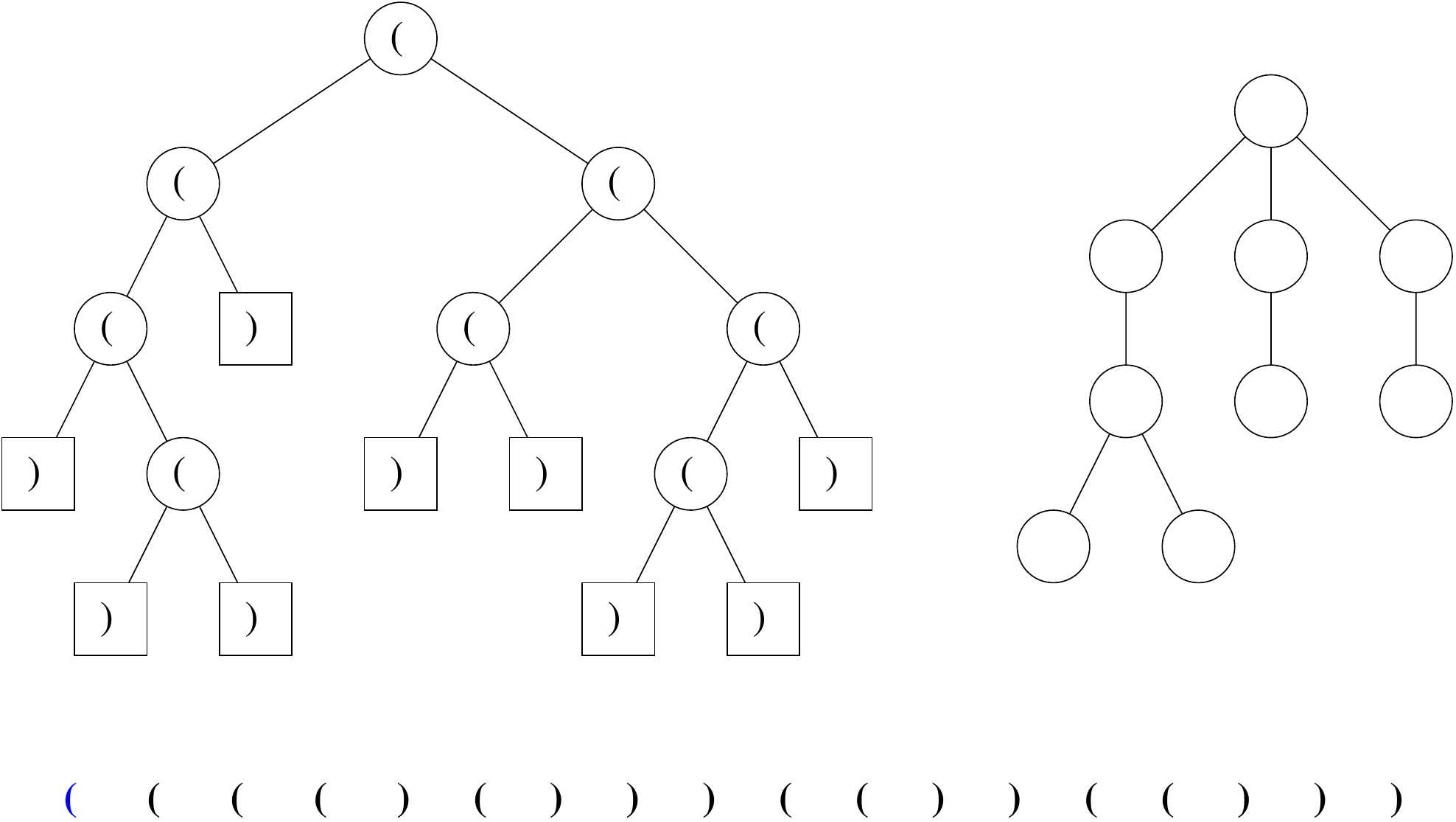}
  \caption{Illustrating the connection between \zak\ of \bintree\ and the BP sequence of \transTree{1} (Left: A binary tree \bintree; Right: The ordinal tree \transTree{1} derived from applying transformation \trans{1} to \bintree).} 
  \label{fig:binary-tree-bp}
\end{figure}

\paragraph{\zak\ and transformation \trans{1}.}
\zak\ of \bintree\ including the extra open parenthesis at the beginning is the same as the BP sequence of \transTree{1}. This is stated in the following lemma:

\begin{theorem}
An open parenthesis appended by \zak\ of a binary tree is the same as the BP sequence of the ordinal tree obtained by applying the first transformation \trans{1} to the binary tree.
\end{theorem}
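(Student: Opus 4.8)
The plan is to prove, by induction on the number $n$ of nodes of the binary tree $T_b$, that prepending an open parenthesis to the Zaks' sequence of $T_b$ yields exactly $\mathrm{BP}(\tau_1(T_b))$ (the BP sequence of $T_1$). Throughout, write $\mathrm{Z}(T)$ for the Zaks' sequence of a binary tree $T$, extended to the empty tree by letting $\mathrm{Z}(\varnothing)$ be the length-one string $)$ --- consistent with the rule that a missing child is replaced by an external node, which is labelled with a closing parenthesis --- and correspondingly let $\tau_1(\varnothing)$ be the single dummy node, whose BP sequence is $()$. The claimed identity is then immediate when $T_b = \varnothing$, and this convention lets me invoke the induction hypothesis freely when a subtree is empty. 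The core of the argument reuses, essentially verbatim, the recursive decomposition of $\mathrm{BP}(\tau_1(\cdot))$ already obtained in the proof of Theorem~\ref{thm:equivalence} through the ``join'' construction pictured in Figure~\ref{fig:join}.

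First I would record the corresponding recursion for the Zaks' sequence. If $T_b$ has root $x$, left subtree $L$ and right subtree $R$, then in the extended tree $x$ is an internal node, contributing an open parenthesis, and its left and right subtrees are the extensions of $L$ and of $R$ (an absent subtree contributing a single external node, i.e.\ a closing parenthesis). Reading the node labels in preorder --- root, then left subtree, then right subtree --- gives $\mathrm{Z}(T_b) = (\,\mathrm{Z}(L)\,\mathrm{Z}(R)$, so the string to be analysed (Zaks' with an open parenthesis prepended) is $(\,(\,\mathrm{Z}(L)\,\mathrm{Z}(R)$.

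For the base case $n=1$: the extended tree is $x$ with two external children, so $\mathrm{Z}(T_b) = ())$, while $\tau_1(T_b)$ is the dummy root with a single leaf child, with BP sequence $(())$; prepending an open parenthesis to $())$ indeed gives $(())$. For the inductive step, let $\mathrm{BP}(\tau_1(L)) = (Y)$ and $\mathrm{BP}(\tau_1(R)) = (W)$ for balanced strings $Y,W$. By the induction hypothesis, $(Y) = (\,\mathrm{Z}(L)$ and $(W) = (\,\mathrm{Z}(R)$, so $\mathrm{Z}(L) = Y)$ and $\mathrm{Z}(R) = W)$; substituting into the recursion above, $\mathrm{Z}(T_b) = (\,\mathrm{Z}(L)\,\mathrm{Z}(R) = (Y)W)$, and prepending an open parenthesis gives $((Y)W)$. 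On the other hand, the computation in the proof of Theorem~\ref{thm:equivalence} shows precisely that $\mathrm{BP}(\tau_1(T_b)) = ((Y)W)$: one relabels the dummy root of $\tau_1(L)$ as $x$ and splices the resulting subtree in as the first child of the dummy root of $\tau_1(R)$. The two strings coincide, which closes the induction.

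I expect the only real difficulty to be bookkeeping --- keeping straight the two off-by-one shifts, one from the prepended parenthesis and one from the dummy root, and treating empty subtrees uniformly so the hypothesis applies at the leaves of the recursion; no idea is needed beyond what Section~\ref{sec:transform} already supplies. One could instead attempt a direct, non-inductive argument from the observation (recorded just before the theorem) that the open parentheses of the sequence enumerate the nodes in preorder while the closing parentheses enumerate them in inorder, combined with the ordering correspondences of Section~\ref{sec:transform}; but making this pin down the entire bracket structure --- i.e.\ that both the open and the close parentheses land in exactly the same positions as in $\mathrm{BP}(\tau_1(T_b))$ --- takes about as much effort as the induction, so I would present the inductive version.
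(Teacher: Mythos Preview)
Your proof is correct, and while both you and the paper argue by induction on $n$, the decompositions differ. The paper unrolls $T_b$ along its \emph{rightmost path} $u_1,\ldots,u_k$: since under $\tau_1$ the right-child relation becomes next-sibling, these nodes are exactly the children of the dummy root of $T_1$, so $\mathrm{BP}(T_1) = (\,b_1 b_2 \cdots b_k\,)$ with $b_i = \mathrm{BP}(\tau_1(l_i))$, and one matches this against the Zaks' string parsed the same way. Your version instead uses the plain root/left/right split and, crucially, imports the identity $\mathrm{BP}(\tau_1(T_b)) = ((Y)W)$ already established in the proof of Theorem~\ref{thm:equivalence}, together with a clean convention for the empty tree so the hypothesis applies uniformly. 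Your route is shorter and reuses existing machinery; the paper's route is self-contained and makes the structural reason (rightmost path $\leftrightarrow$ children of the dummy root) explicit without appealing to the earlier theorem.
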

\begin{proof}
Let \bintree\ and \transTree{1} be the binary tree and ordinal tree respectively, and let $S$ be the sequence of an open parenthesis appended by \zak\ of \bintree. We prove the lemma by induction on the number of nodes in \bintree.

For the base case,
If \bintree\ has only one node, then \zak\ of \bintree\ will be $ ()) $ and the BP of \transTree{1} will be $ (()) $.  Take as the induction hypothesis that the 
lemma is correct for \bintree\ with less than $ n $ nodes.

In the inductive step, let \bintree\ have $ n $ nodes. Let $u_1, u_2, \ldots, u_k$ be the right most path in \bintree, where $ 1\le k \le n $. Let $l_i$ be the left subtree of $ u_i $ for $ 1\le i \le k $. Thus, $ S = ( (z_1 (z_2 \cdots (z_k ) $, where $ z_i $ is \zak\ of $ l_i $. Notice that the size of each $ l_i $ is smaller than $ n $. Therefore by the hypothesis, $(z_i $ is the BP sequence $b_i$ of the ordinal tree \transarg{1}{l_i}. The ordinal tree \transTree{1} has a dummy root which has $ k $ subtrees which are identical to \transarg{1}{l_i} for $1\le i \le k$. Therefore, the BP sequence of \transTree{1} is $(b_1 b_2 \cdots b_k) $, where $ b_i$ is the BP sequence of the subtree rooted at \transarg{1}{u_i}. Thus, the BP sequence of \transTree{1} is the same as $ S $.
\end{proof}

\section{Cartesian Tree Construction in $o(n)$ Working Space}
\label{sec:construction-space}

Given an array $A$, we now show how to construct one of the succinct tree representations 
implied in Theorem~\ref{thm:transform} in small working space.
The model is that the array $A$ is present in read-only memory.  
There is a working memory, which is read-write, and the succinct 
Cartesian tree representation is written to a readable, but 
write-once, output memory.  All memory is assumed to be randomly accessible.

%
%
A straightforward way to construct a succinct representation of a Cartesian tree is to construct the standard pointer-based representation of the Cartesian tree from the given array in linear time~\cite{gbt-stoc84}, and then construct the succinct representation using the pointer-based representation. The drawback of this approach is that the space used during the construction is $O(n \log n)$ bits, although the final structure uses only $O(n)$ bits.  Fischer and Heun~\cite{fh-sjc11} show that the construction space can be reduced to $n + o(n)$ bits. In this section, we show how to improve the construction space to $o(n)$ bits.  In fact, we first show that a parenthesis sequence
corresponding to the Cartesian tree can be output in linear time using only $O(\sqrt{n} \log n)$ bits
of working space; subsequently, using methods from \cite{grrr-tcs06} the auxiliary data structures can also
be constructed in $o(n)$ working space.


\begin{theorem}
Given an array $A$ of $n$ values, let $T_c$ be the Cartesian tree of $A$.
We can output BP(\trans{4}($T_c$)) in $O(n)$ time 
using $O(\sqrt{n} \log n)$ bits of working space.
\end{theorem}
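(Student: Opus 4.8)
The plan is to (i) identify $\mathrm{BP}(\trans{4}(T_c))$ with the output of a classical left‑to‑right stack scan of $A$, and then (ii) run that scan in $O(\sqrt{n}\log n)$ bits of working memory by processing $A$ in $\Theta(\sqrt{n})$ blocks. For (i): consider the procedure that emits ``('', then for $k=1,\dots,n$ repeatedly pops and emits ``)'' while the stack is nonempty and its top value \emph{strictly} exceeds $A[k]$ (strictness makes ties favour the earlier index, matching the leftmost‑minimum convention), then pushes $A[k]$ and emits ``('', and finally drains the stack, emitting ``)'' for each remaining element. At every step the stack holds a sentinel together with the right spine of the Cartesian tree of the prefix read so far, and the emitted string is balanced of length $2(n+1)$. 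I would prove that it equals $\mathrm{BP}(\trans{4}(T_c))$ by induction on $n$ along the recursive decomposition of the Cartesian tree, exactly as in the proof of Theorem~\ref{thm:equivalence}: the opening parentheses appear in array order $1,2,\dots,n$, which (ignoring the dummy root) is the inorder of $T_c$, hence essentially the preorder of \trans{4}($T_c$), and the run of closes emitted just before pushing $A[k]$ contains exactly one ``)'' per spine node that $A[k]$ cuts off — which is what $\mathrm{BP}(\trans{4}(T_c))$ does.

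The scan runs in $O(n)$ time, but the stack can be $\Theta(n)$ deep (take $A$ increasing), costing $\Theta(n\log n)$ bits. To avoid this, split $[1..n]$ into $\Theta(\sqrt{n})$ consecutive blocks of size $\le\sqrt{n}$ and scan them left to right, representing the stack at two granularities. For the block currently being scanned I keep the usual explicit stack of its spine positions ($\le\sqrt{n}$ entries). For every earlier block that still contributes to the global spine I keep only $O(1)$ words: its index, the number of positions it currently contributes, and the smallest of their values; I also keep running arrays of block minima and their suffix minima ($O(\sqrt{n})$ words total). The key structural lemma to establish is that the positions of an earlier block $B_j$ lying on the current global Cartesian‑tree spine are exactly $\{\, i\in B_j : i \text{ lies on the right spine of the Cartesian tree of } A[B_j],\ A[i]<\mu_j \,\}$, where $\mu_j$ is the minimum of $A$ over the already‑read positions after $B_j$; this set is a prefix of $B_j$'s ``local'' spine (whose values increase downwards), so it is recoverable in $O(\sqrt{n})$ time by a single rescan of $B_j$, while $\mu_j$ is read off in $O(1)$ time from the suffix‑minima array and the current block's running prefix minimum.

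Given these invariants, processing $A[k]$ goes as follows. Pop the explicit stack, emitting ``)'' for each position whose value exceeds $A[k]$; if it is not emptied, push $A[k]$, emit ``('', and stop (the parent of $A[k]$ is in the current block). Otherwise descend through the earlier‑block summaries in decreasing index order: a summary whose minimum still exceeds $A[k]$ is removed in full, emitting its stored count of ``)''; for the first summary $B_j$ that survives, rescan $B_j$, emit ``)'' for each of its contributing positions exceeding $A[k]$, shrink $B_j$'s summary (its new threshold is implicitly $A[k]$, captured by the running minima), push $A[k]$, emit ``('', and stop. When a block finishes being scanned, its explicit stack is precisely its local spine, which we convert into one new summary; at the end we drain all summaries and the sentinel, emitting ``)'' appropriately. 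Since a summary removed in full makes its block permanently inactive, the full removals cost no rescans at all (the count is stored) and $O(1)$ work per block.

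The main obstacle is making the time and space bounds hold \emph{simultaneously}. Keeping a re‑activated block's residual spine explicit avoids rescanning it again but lets residues from many blocks accumulate, pushing the stack back to $\Theta(n)$ entries; recomputing a block's contribution on every descent into it keeps the space at $O(\sqrt{n}\log n)$ bits but can force $\Theta(\sqrt{n})$ rescans of one block, blowing the time up to $\Theta(n^{1.5})$. The proof must therefore pin down the right compromise — for instance, showing that a re‑activated block is fully drained before another one is re‑activated, so that $O(1)$ blocks are ever explicit at a time, and/or amortizing the rescans of a fixed block against the strictly decreasing sequence of values that can ever descend into it — so that each block incurs only $O(1)$ amortized ``expensive'' rescans. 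This amortized accounting is the crux; once it is in place, the total time is $O(n)$, the working space is $O(\sqrt{n}\log n)$ bits, and the sequence is written directly to the write‑once output. (The $o(n)$‑bit auxiliary indexes can subsequently be appended in $o(n)$ working space by the incremental constructions of \cite{grrr-tcs06}, as the surrounding text notes, but that lies outside this statement.)
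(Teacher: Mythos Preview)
Your identification in part (i) of $\mathrm{BP}(\tau_4(T_c))$ with the output of the left-to-right stack scan is essentially the paper's argument (the paper phrases it as computing a post-order degree sequence of $\tau_1(T_c)$ and then showing this equals $\mathrm{BP}(\tau_4(T_c))$, but this is the same content).

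Part (ii), however, takes a different route from the paper and, as written, has the very gap you flag. You block the \emph{input array} into $\Theta(\sqrt{n})$ pieces and keep per-block summaries; you then need to rescan a block whenever a new element ``dips into'' it without removing it entirely. As you yourself observe, a single block can be dipped into $\Theta(\sqrt{n})$ times (take $A$ increasing on $B_1$ and let subsequent elements be $\sqrt{n}-\tfrac12,\sqrt{n}-\tfrac32,\ldots$), so the algorithm \emph{as you describe it}---shrinking $B_j$'s summary after each rescan---is $\Theta(n^{1.5})$. You correctly identify the fix (keep the residual spine of the one re-activated block explicit; since descending past it to an earlier block requires draining it first, at most one such residual is ever alive), but you do not actually prove it; you present it as something ``the proof must pin down.'' That is the missing step: the invariant that at most one earlier block is explicit at any time, together with the consequence that each block is rescanned at most once, needs to be stated and argued, not left as a desideratum.

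The paper sidesteps this amortization issue entirely by a different idea: it blocks the \emph{output string} rather than the input. The crucial observation is that the output memory, though write-once, is \emph{readable}, and the unmatched open parentheses in the current output are exactly the suffix-minima positions (the $k$-th open parenthesis corresponds to $A[k]$). So the paper stores, for each $\sqrt{n}$-sized block of the output containing an unmatched parenthesis, the position $p$ of its rightmost unmatched parenthesis and a pointer to the previous such block. Comparing $A[i]$ with the next suffix minimum then amounts to scanning the output leftward from $p$, skipping maximal balanced substrings; once a parenthesis has been scanned it lies strictly to the right of the updated $p$ and is never scanned again, giving linear total time with no amortization argument beyond ``each output symbol is visited once.'' This is both simpler and exploits the model (readable write-once output) in a way your approach does not.
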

\begin{proof}
The algorithm reads $A$ from left to right and outputs a
parenthesis sequence as follows: when processing $A[i]$, for
$i = 1, \ldots, n$ we compare 
$A[i]$ with all the suffix minima of $A[1..i-1]$---if $A[i]$ is smaller
than $j \ge 0$ suffix minima, then we output the string $(\, )^j$.  Finally
we add the string $(\,)^{j+1}$ to the end, where $j$ is the number of suffix 
minima of $A[1..n]$.  

Given any ordinal tree $T$, define the \emph{post-order degree sequence}
parenthesis string obtained from $T$, denoted by PODS($T$), as follows.  
Traverse $T$ in post-order, and at each node 
that has $d$ children, output the bit-string $(\, )^d$.  At the end, output
an additional $)$. Let $T_c$ be the Cartesian tree of $A$ and for $i = 1, \ldots, 4$, 
let \transTree{i} = \trans{i}($T_c$) as before.  Observe that in 
\transTree{1}, for $i=1, \ldots, n$, 
the children of the $i$-th node in post-order, which is the
$i$-th node in $T_c$ in in-order and hence corresponds to $A[i]$,
are precisely those suffix minima that $A[i]$ was smaller than when
$A[i]$ was processed.  Furthermore, the children of the (dummy) root of
\transTree{1} are the suffix maxima of $A[1..n]$. Thus,
the string output by the above pre-processing of $A$ is 
PODS(\transTree{1}).

We now show that PODS(\transTree{1}) = BP(\transTree{4}).  The proof
is along the lines of Theorem~\ref{thm:equivalence}.  If $L$ and $R$ denote
the left subtree of the root of $T_c$, assume inductively that
BP(\trans{4}($L$)) = PODS(\trans{1}($L$)) = $(Y)$ and 
BP(\trans{4}($R$)) = PODS(\trans{1}($L$)) = $(Z)$.
Using the reasoning in Theorem~\ref{thm:equivalence} (summarized
in Figure~\ref{fig:join}) we see immediately 
that BP(\transTree{4}) = $(Y(Z))$, and by a reasoning similar to
the one used for \dfuds\ in Theorem~\ref{thm:equivalence}, it is
easy to see that PODS(\transTree{1}) = $(Y(Z))$ as well, as required.

While the straightforward approach would be to maintain a linked list of the locations of the current suffix minima, this list could contain $\Theta(n)$ locations and could take $\Theta(n \log n)$ bits.  
Our approach is to use the output string itself to encode the positions of the
suffix minima.  One can observe that if the output string is created by
the above process, it will be of the form $( b_1 ( ... ( b_k $ where 
each $b_i$ is a (possibly empty) maximal balanced parenthesis string -- the remaining
parentheses are called \emph{unmatched}.  It is not hard to see that the unmatched
parentheses encode the positions of the suffix minima in the sense that if the
unmatched parentheses (except the opening one) 
are the $i_1, i_2 \ldots, i_k$-th opening parentheses in the
current output sequence then the positions $i_1 - 1, \ldots, i_k - 1$ are precisely the
suffix minima positions.  Our task is therefore to sequentially access the next
unmatched parenthesis, starting from the end, when adding the new element $A[i+1]$.
We conceptually break the string into blocks of size $\lfloor \sqrt{n} \rfloor$.
For each block 
that contains at least one unmatched parenthesis, store the following information:

\begin{itemize}
\item its block number (in the original parenthesis string) and the total number of 
open parenthesis in the current output string before the start of the block.
\item the position $p$ of the rightmost parenthesis in the block, and the number 
of open parentheses before it in the block.
\item a pointer to the next block with at least one unmatched parenthesis.
\end{itemize}
This takes $O(\log n)$ bits per block, which is $O(\sqrt{n} \log n)$ bits. 
\begin{itemize}
\item For the rightmost block (in which we add the new parentheses), 
keep positions of all the unmatched parentheses: the space for this is also $O(\sqrt{n} \log n)$ bits.
\end{itemize}
When we process the next element of $A$, we compare it with 
unmatched parentheses in the rightmost block, which takes
$O(1)$ time per unmatched parenthesis that we compared the new element with,  
as in the algorithm of \cite{gbt-stoc84}.
Updating the last block is also trivial.
Suppose we have compared $A[i]$ and found it smaller than all suffix maxima
in the rightmost block.  Then, using the linked list, we find the rightmost unmatched
parenthesis (say at position $p$) in the next block in the list, 
which takes $O(1)$ time, and compare with it 
(this is also $O(1)$ time). If $A[i]$ is smaller, then sequentially scan this block
leftwards starting at position $p$, skipping over a maximal BP sequence to find the next
unmatched parenthesis in that block.   The time for this sequential scan is $O(n)$ 
overall, since we never sequentially scan the same parenthesis twice.  Updating the
blocks is straightforward.  Thus, the creation of the output string can be done
in linear time using $O(\sqrt{n} \log n)$ bits of working memory. 
\end{proof}

\section*{Acknowledgements}
S. R. Satti's research was partly supported by Basic Science Research Program
through the National Research Foundation of Korea funded by the Ministry of
Education, Science and Technology (Grant number 2012-0008241).
\\
P. Davoodi's research was supported by NSF grant CCF-1018370 and BSF grant 2010437 (this work was partially done while P.Davoodi was with MADALGO, Center for Massive Data Algorithmics, a Center of the Danish
National Research Foundation, grant DNRF84, Aarhus University, Denmark).
\bibliographystyle{abbrv}


\begin{thebibliography}{10}

\bibitem{Abouelhoda2004}
M.~I. Abouelhoda, S.~Kurtz, and E.~Ohlebusch.
\newblock Replacing suffix trees with enhanced suffix arrays.
\newblock {\em Journal of Discrete Algorithms}, 2(1):53--86, 2004.

\bibitem{acns09}
D.~Arroyuelo, R.~C{\'a}novas, G.~Navarro, and K.~Sadakane.
\newblock Succinct trees in practice.
\newblock In G.~E. Blelloch and D.~Halperin, editors, {\em ALENEX}, pages
  84--97. SIAM, 2010.

\bibitem{bdmrrr-05}
D.~Benoit, E.~D. Demaine, J.~I. Munro, R.~Raman, V.~Raman, and S.~S. Rao.
\newblock Representing trees of higher degree.
\newblock {\em Algorithmica}, 43(4):275--292, 2005.

\bibitem{Chen2008}
G.~Chen, S.~J. Puglisi, and W.~F. Smyth.
\newblock Lempel-{Z}iv factorization using less time and space.
\newblock {\em Mathematics in Computer Science}, 1:605--623, 2008.

\bibitem{NavarroDRS14}
P.~Davoodi, G.~Navarro, R.~Raman, and S.~R. Satti.
\newblock Encoding range minima and range top-2 queries.
\newblock {\em Philosphical Transactions of the Royal Society A}, 372:20130131,
  2014.

\bibitem{fm-swat08}
A.~Farzan and J.~I. Munro.
\newblock A uniform approach towards succinct representation of trees.
\newblock In {\em Proc. 11th Scandinavian Workshop on Algorithm Theory}, pages
  173--184. Springer-Verlag, 2008.

\bibitem{frs-icalp09}
A.~Farzan, R.~Raman, and S.~S. Rao.
\newblock Universal succinct representations of trees?
\newblock In {\em Proc. 36th International Colloquium on Automata, Languages
  and Programming}, pages 451--462. Springer, 2009.

\bibitem{fh-sjc11}
J.~Fischer and V.~Heun.
\newblock Space-efficient preprocessing schemes for range minimum queries on
  static arrays.
\newblock {\em SIAM Journal on Computing}, 40(2):465--492, 2011.

\bibitem{gbt-stoc84}
H.~N. Gabow, J.~L. Bentley, and R.~E. Tarjan.
\newblock Scaling and related techniques for geometry problems.
\newblock In {\em Proc. 16th annual ACM symposium on Theory of computing},
  pages 135--143. ACM Press, 1984.

\bibitem{grrr-tcs06}
R.~F. Geary, N.~Rahman, R.~Raman, and V.~Raman.
\newblock A simple optimal representation for balanced parentheses.
\newblock {\em Theoretical Computer Science}, 368(3):231--246, 2006.

\bibitem{Georgiadis2004}
L.~Georgiadis and R.~E. Tarjan.
\newblock Finding dominators revisited: extended abstract.
\newblock In {\em Proc. 15th Annual ACM-SIAM symposium on Discrete algorithms},
  pages 869--878. SIAM, 2004.

\bibitem{sdsl}
S.~Gog, T.~Beller, A.~Moffat, and M.~Petri.
\newblock From theory to practice: Plug and play with succinct data structures.
\newblock {\em CoRR}, abs/1311.1249, 2013.

\bibitem{ottaviano}
R.~Grossi and G.~Ottaviano.
\newblock Design of practical succinct data structures for large data
  collections.
\newblock In V.~Bonifaci, C.~Demetrescu, and A.~Marchetti-Spaccamela, editors,
  {\em SEA}, volume 7933 of {\em Lecture Notes in Computer Science}, pages
  5--17. Springer, 2013.

\bibitem{Iliopoulos2008}
C.~S. Iliopoulos, M.~Crochemore, M.~Kubica, M.~S. Rahman, and T.~Walen.
\newblock Improved algorithms for the range next value problem and
  applications.
\newblock In {\em Proc. 25th International Symposium on Theoretical Aspects of
  Computer Science}, volume~1 of {\em Leibniz International Proceedings in
  Informatics}, pages 205--216. Schloss Dagstuhl - Leibniz-Zentrum fuer
  Informatik, 2008.

\bibitem{jacobson89}
G.~Jacobson.
\newblock {\em Succinct Static Data Structures}.
\newblock PhD thesis, Carnegie Mellon University, Pittsburgh, PA, USA, 1989.

\bibitem{mr-sjc01}
J.~I. Munro and V.~Raman.
\newblock Succinct representation of balanced parentheses and static trees.
\newblock {\em SIAM Journal on Computing}, 31(3):762--776, 2001.

\bibitem{Muthukrishnan2002}
S.~Muthukrishnan.
\newblock Efficient algorithms for document retrieval problems.
\newblock In {\em Proc. 13th Annual ACM-SIAM symposium on Discrete algorithms},
  pages 657--666. SIAM, 2002.

\bibitem{DBLP:conf/focs/Patrascu08}
M.~Patrascu.
\newblock Succincter.
\newblock In {\em FOCS}, pages 305--313. IEEE Computer Society, 2008.

\bibitem{DBLP:conf/birthday/Raman013}
R.~Raman and S.~S. Rao.
\newblock Succinct representations of ordinal trees.
\newblock In A.~Brodnik, A.~L{\'o}pez-Ortiz, V.~Raman, and A.~Viola, editors,
  {\em Space-Efficient Data Structures, Streams, and Algorithms}, volume 8066
  of {\em Lecture Notes in Computer Science}, pages 319--332. Springer, 2013.

\bibitem{Sadakane2007}
K.~Sadakane.
\newblock Compressed suffix trees with full functionality.
\newblock {\em Theory of Computing Systems}, 41(4):589--607, 2007.

\bibitem{Sadakane2007a}
K.~Sadakane.
\newblock Succinct data structures for flexible text retrieval systems.
\newblock {\em Journal of Discrete Algorithms}, 5(1):12--22, 2007.

\bibitem{sn-soda10}
K.~Sadakane and G.~Navarro.
\newblock Fully-functional succinct trees.
\newblock In {\em Proc. 21st Annual ACM-SIAM Symposium on Discrete Algorithms},
  pages 134--149. SIAM, 2010.

\bibitem{Saxena2009}
S.~Saxena.
\newblock Dominance made simple.
\newblock {\em Information Processing Letters}, 109(9):419--421, 2009.

\bibitem{Valimaki2007}
N.~V\"{a}lim\"{a}ki and V.~M\"{a}kinen.
\newblock Space-efficient algorithms for document retrieval.
\newblock In {\em Proc. 18th Annual Symposium on Combinatorial Pattern
  Matching}, volume 4580 of {\em LNCS}, pages 205--215. Springer-Verlag, 2007.

\bibitem{Vuillemin1980}
J.~Vuillemin.
\newblock A unifying look at data structures.
\newblock {\em Communications of the ACM}, 23(4):229--239, 1980.

\bibitem{DBLP:journals/tcs/Zaks80}
S.~Zaks.
\newblock Lexicographic generation of ordered trees.
\newblock {\em Theor. Comput. Sci.}, 10:63--82, 1980.

\end{thebibliography}

\end{document}